\newtheorem{theorem}{\indent Theorem}
\newtheorem*{proof}{\indent Proof}
\newtheorem{proposition}{\indent Proposition}
\newcommand{\Rmnum}[1]{\expandafter\@slowromancap\romannumeral #1@}
\begin{document}

\title{Empowering Over-the-Air Personalized Federated Learning via RIS}

\author{Wei~Shi, Jiacheng~Yao, Jindan~Xu,~\IEEEmembership{Member,~IEEE}, Wei~Xu,~\IEEEmembership{Senior~Member,~IEEE}, Lexi Xu, and Chunming~Zhao,~\IEEEmembership{Member,~IEEE}
\thanks{W. Shi, J. Yao, W. Xu, and C. Zhao are with the National Mobile Communications Research Laboratory, Southeast University, Nanjing 210096, China, and are also with the Purple Mountain Laboratories, Nanjing 211111, China (e-mail: \{wshi, jcyao, wxu, cmzhao\}@seu.edu.cn).}
\thanks{J. Xu is with the School of Electrical and Electronics Engineering, Nanyang Technological University, Singapore 639798, Singapore (e-mail: jindan.xu@ntu.edu.sg).}
\thanks{L. Xu is with the Research Institute, China United Network Communications Corporation, Beijing 100048, China (e-mail: davidlexi@hotmail.com).}
}

\maketitle

\begin{abstract}

Over-the-air computation (AirComp) integrates analog communication with task-oriented computation, serving as a key enabling technique for communication-efficient federated learning (FL) over wireless networks. However, AirComp-enabled FL (AirFL) with a single global consensus model fails to address the data heterogeneity in real-life FL scenarios with non-independent and identically distributed local datasets. In this paper, we introduce reconfigurable intelligent surface (RIS) technology to enable efficient personalized AirFL, mitigating the data heterogeneity issue. First, we achieve statistical interference elimination across different clusters in the personalized AirFL framework via RIS phase shift configuration. Then, we propose two personalized aggregation schemes involving power control and denoising factor design from the perspectives of first- and second-order moments, respectively, to enhance the FL convergence. Numerical results validate the superior performance of our proposed schemes over existing baselines.

\begin{IEEEkeywords}
Federated learning (FL), over-the-air computation (AirComp), personalized FL (PFL), reconfigurable intelligent surface (RIS), statistical interference elimination.
\end{IEEEkeywords}

\end{abstract}

\section{Introduction}
To build ubiquitous intelligence at the edge of wireless networks, federated learning (FL) stands out as a promising distributed learning approach due to its privacy-enhancing characteristic \cite{0,1,gzhu}. In a wireless FL system, multiple distributed devices communicate with a parameter server (PS) via wireless links for collaborative model training~\cite{yjc1,2}. To enhance communication efficiency of wireless FL, over-the-air computation (AirComp) has emerged as a key technique by exploiting the waveform superposition property of multiple access channels. Specifically, AirComp enables fast aggregation of gradients from distributed devices through non-orthogonal multiple access, aligning with FL's requirement of averaging local gradients without necessitating access to individual values~\cite{3}.

Although AirComp-enabled FL (AirFL) offers significant performance gains, it does not address the data heterogeneity in most real-life FL scenarios with non-independent and identically distributed local datasets. Such data heterogeneity hinders the generalization of a single global consensus model. To this end, preliminary works have been made to develop a personalized AirFL framework via clustering algorithms, where different models are trained for different clusters under the orchestration of the PS~\cite{4,5}. However, this personalized framework requires large-scale receiving antennas to combat interference, leading to a significant escalation in hardware cost.

As a cost-effective physical-layer technology, reconfigurable intelligent surface (RIS) has been extensively studied to support various communication applications due to its capability for smart channel reconstruction~\cite{6,sw3}. In this paper, we introduce low-cost RIS to achieve statistical interference elimination across different clusters and facilitate simultaneous multi-cluster computation over-the-air, thereby enhancing the efficiency of personalized AirFL.

\section{System Model}
We consider a personalized AirFL system consisting of $K$ distributed devices, which are partitioned into $M$ ($M\!<\!K$) disjoint clusters $\mathcal{K}_1,\dots,\mathcal{K}_M$. A specific clustering method can be found in~\cite{4}, which is not the focus of this paper. Our goal is to find the optimal personalized model parameters $\mathbf{w}_m\in \mathbb{R}^D$ for each cluster $m\!\in\![M]$ to minimize the loss function
${\mathcal{L}}_m\!\left(\mathbf{w}_m\right)\!=\!\frac{1}{\left|\mathcal{K}_m\right|}\sum_{k\in\mathcal{K}_m}{F_k\!\left(\!\mathbf{w}_m,\mathcal{D}_{k}\!\right)}$,
where $F_k\left(\cdot,\mathcal{D}_{k}\right)$ is the loss function of device $k$ with local dataset $\mathcal{D}_{k}$. 

Distributed stochastic gradient descent (SGD) is adopted to optimize $\mathbf{w}_m$ in an iterative manner. First, at each training round $t$, the PS broadcasts the latest personalized models $\{\mathbf{w}_{m,t}\}_{m\in[M]}$ to each device. Then, based on the clustering mechanism, each device $k\in\mathcal{K}_m$ computes its local gradient $\mathbf{g}_{m,t,k}\in\mathbb{R}^{D}$ based on $\mathbf{w}_m$ and its local dataset $\mathcal{D}_{k}$, and reports it to the PS. Finally, after receiving all the local gradients, the PS calculates the global gradient of cluster $m$ as
\begin{equation}
\mathbf{g}_{m,t} = \frac{1}{\left|\mathcal{K}_m\right|}\sum_{k\in\mathcal{K}_m}\mathbf{g}_{m,t,k},
\label{eq2}
\end{equation}
and updates the personalized model for cluster $m$ through $\mathbf{w}_{m,t+1}\!=\!\mathbf{w}_{m,t}\!-\!\eta_{m,t} \mathbf{g}_{m,t}$, where $\eta_{m,t}$ is a chosen learning rate at the $t$-th training round.
The above steps iterate until a convergence condition is met.

Note that the operation in (\ref{eq2}) requires the PS to sum the local gradients of devices in each cluster separately. By applying AirComp, all devices simultaneously upload the analog signals of local gradients to the PS, achieving summation over-the-air. However, the analog nature of AirFL makes the PS cannot distinguish between the gradients of different clusters.
In the following, we introduce an RIS-enabled personalized AirFL framework to address this challenge. Each cluster is assisted by an RIS with $N$ reflecting elements to help realize the personalized model aggregation. To support simultaneous multi-cluster gradient estimation, at least $M$ receiving antennas are required. Without loss of generality, we consider a PS equipped with $M$ receiving antennas. Then, the received signal at the PS in the $t$-th round, $\mathbf{Y}_t\!=\!\left[\mathbf{y}_{1,t},\mathbf{y}_{2,t},\cdots\!,\mathbf{y}_{M,t}\right]^H\!\in\!\mathbb{C}^{M\!\times\! D}$, is given by
\begin{equation}
\mathbf{Y}_t=\sum_{k=1}^K \sqrt{p_{k}} \left(\sum_{i=1}^{M} \beta_{i,k}\mathbf{H}_{p,i}^H\mathbf{\Theta}_i\mathbf{h}_{i,k}\right) \bar{\mathbf{g}}_{m,t,k}^H +\mathbf{Z}_t,
\label{eq3}
\end{equation}
where $\bar{\mathbf{g}}_{m,t,k}\!\triangleq\!\frac{1}{\sigma_{m,t,k}}(\mathbf{g}_{m,t,k}\!-\!u_{m,t,k}\mathbf{1})$ represents the normalized gradient, $u_{m,t,k}$ and $\sigma_{m,t,k}$ denote the mean and standard deviation of all entries in $\mathbf{g}_{m,t,k}$, $p_{k}$ is the transmit power of device $k$, $\beta_{i,k}$ is the cascaded large-scale fading coefficient from device $k$ to the PS through the $i$-th RIS, $\mathbf{H}_{p,i}\!=\!\left[\mathbf{h}_{p,i,1},\mathbf{h}_{p,i,2},\cdots,\mathbf{h}_{p,i,M}\right]\sim \mathcal{CN}(\mathbf{0},\mathbf{I}_{N}\otimes\mathbf{I}_M)$ and $\mathbf{h}_{i,k}\!\sim\!\mathcal{CN}(\mathbf{0},\mathbf{I}_N)$ denote the small-scale fading channel from the $i$-th RIS to the PS and device $k$ to the $i$-th RIS, respectively, $\mathbf{Z}_t\!=\!\left[\mathbf{z}_{1,t},\mathbf{z}_{2,t},\cdots,\mathbf{z}_{M,t}\right]^H$ is additive white Gaussian noise whose entries follow $\mathcal{CN}(0,\sigma^2)$, $\mathbf{\Theta}_i\!\triangleq\!{\rm diag}\left\{{\rm e}^{j\theta_{i\!,\!1}},\ldots,{\rm e}^{j\theta_{i\!,\!n}},\ldots,{\rm e}^{j\theta_{i\!,\!N}}\right\}$ is the reflection matrix of the $i$-th RIS, and $\theta_{i,n}\!\in\![0,2\pi)$ is the phase shift introduced by the $n$-th RIS reflecting element. 
Then, based on the signal $\mathbf{y}_{m,t}$ at the $m$-th receiving antenna, the PS computes an estimated global gradient of cluster $m$ as ${\hat{\mathbf{g}}}_{m,t}\!=\!\frac{\Re\left\{\mathbf{y}_{m,t}\right\}}{\lambda_m}\!+\!\sum_{k\in\mathcal{K}_m }\!\!\frac{u_{m,t,k}}{|\mathcal{K}_m|}\mathbf{1}$, where $\lambda_m\!>\!0$ is a denoising factor introduced by the PS. It is rewritten as
\begin{align}
\hat{\mathbf{g}}_{m,t}=\sum_{k\in\mathcal{K}_m}{{\ell_{m,k}}\bar{\mathbf{g}}_{m,t,k}}+ \sum_{k\in\mathcal{K}_m }\!\frac{u_{m,t,k}}{|\mathcal{K}_m|}\mathbf{1}+\sum_{\substack{1\le m^\prime\le M\\m^\prime\neq m}}\sum_{k^\prime\in\mathcal{K}_{m^\prime}}{\ell_{m,k^\prime}\bar{\mathbf{g}}_{m^\prime,t,k^\prime}}\!+\bar{\mathbf{z}}_{m,t},
\label{eq4}
\end{align}
where $\ell_{m,k}\!=\!\frac{\sqrt{p_k}}{\lambda_m}\sum_{i=1}^{M}{\beta_{i,k}\Re\!\left\{\!\mathbf{h}_{p,i,m}^H\mathbf{\Theta}_i\mathbf{h}_{i,k}\!\right\}\!}$, $\forall m,k$, and $\bar{\mathbf{z}}_{m,t}\!\triangleq\!\frac{\Re\!\{\mathbf{z}_{m,t}\}}{\lambda_m}$ is the equivalent noise.
Note that the estimated gradient is interfered by signals from other clusters and these interference cannot be eliminated since $M<K$. To this end, we propose an RIS phase shift configuration scheme that fortunately eliminates the interference from a statistical perspective in the following theorem.

\begin{theorem} \label{theorem1}
    Statistical interference elimination, i.e., $\mathbb{E}[\ell_{m,k}]\!>\!0$, $\forall k\!\in\!\mathcal{K}_m$ and $\mathbb{E}[\ell_{m,k^\prime}]\!=\!0$, $\forall k^\prime\!\notin\!\mathcal{K}_m$, can be achieved by setting
    \begin{align}\label{RISphase}
    \theta_{m,n}&=-\angle h_{p,m,m,n}^\ast+\angle\sum_{k\in\mathcal{K}_m} h_{m,k,n}^\ast,
    \end{align}
    for $m\!\!\in\!\!\left[M\right]$ and $n\!\!\in\!\!\left[N\right]$, where $h_{p,m,m,n}$ and $h_{m,k,n}$ are the~$n$-th elements of channel vectors $\mathbf{h}_{p,m,m}$ and $\mathbf{h}_{m,k}$, respectively.
\end{theorem}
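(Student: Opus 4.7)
The plan is to substitute the prescribed phase shift \eqref{RISphase} into $\ell_{m,k}$, expand the bilinear form $\Re\{\mathbf{h}_{p,i,m}^H\mathbf{\Theta}_i\mathbf{h}_{i,k}\}=\Re\{\sum_{n} h_{p,i,m,n}^{\ast} e^{j\theta_{i,n}} h_{i,k,n}\}$ element by element, and split the outer sum over the RIS index $i$ into a ``matched'' contribution ($i$ equals the target cluster) and ``unmatched'' contributions. With $S_i(n)\triangleq\sum_{k''\in\mathcal{K}_i}h_{i,k'',n}^{\ast}$, the phase factor simplifies as $e^{j\theta_{i,n}}=e^{j\angle h_{p,i,i,n}}\cdot e^{j\angle S_i(n)}$, so the matched term $i=m$ collapses to $|h_{p,m,m,n}|\cdot e^{j\angle S_m(n)}\cdot h_{m,k,n}$, while for $i\neq m$ the factor $h_{p,i,m,n}^{\ast}$ stays exposed as a zero-mean circularly symmetric Gaussian that is independent of the entire RIS-$i$ design.

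For a same-cluster user $k\in\mathcal{K}_m$, the unmatched terms $i\neq m$ vanish in expectation because $h_{p,i,m,n}^{\ast}$ is independent of $(h_{p,i,i,n},S_i(n),h_{i,k,n})$ and has zero mean. It then remains to show strict positivity of the matched term. The key symmetry trick is that the joint distribution of $(S_m(n),h_{m,k,n})$ is invariant under permutations of the users in $\mathcal{K}_m$, so $\mathbb{E}[\Re\{e^{j\angle S_m(n)} h_{m,k,n}\}]$ takes the same value for every $k\in\mathcal{K}_m$; summing this over $k$ telescopes to $\mathbb{E}[\Re\{e^{j\angle S_m(n)} S_m^{\ast}(n)\}]=\mathbb{E}[|S_m(n)|]$, so each per-user expectation equals $|\mathcal{K}_m|^{-1}\mathbb{E}[|S_m(n)|]>0$. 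Multiplying by the independent Rayleigh factor $\mathbb{E}[|h_{p,m,m,n}|]>0$, by the positive coefficient $\sqrt{p_k}\beta_{m,k}/\lambda_m$, and summing over $n$ delivers $\mathbb{E}[\ell_{m,k}]>0$.

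For a cross-cluster user $k'\in\mathcal{K}_{m'}$ with $m'\neq m$, I would treat three subcases. When $i=m$, the factor $h_{m,k',n}$ is zero-mean and, since $k'\notin\mathcal{K}_m$, independent of both $S_m(n)$ and $|h_{p,m,m,n}|$, so the expectation is zero. When $i=m'$, the factor $h_{p,m',m,n}^{\ast}$ is zero-mean and, because $m\neq m'$, independent of $h_{p,m',m',n}$, $S_{m'}(n)$, and $h_{m',k',n}$, again killing the term. The remaining cases $i\notin\{m,m'\}$ reduce to the argument already used for the unmatched terms in the previous paragraph. The only delicate step is the positivity argument, since $h_{m,k,n}$ appears inside $S_m(n)^{\ast}$ and cannot be factored out naively; invoking exchangeability of the i.i.d.\ device--RIS channels is what makes the per-user mean identifiable. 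All remaining manipulations are routine consequences of Gaussian circular symmetry and the mutual independence of the channel blocks $\{\mathbf{H}_{p,i}\}_i$ and $\{\mathbf{h}_{i,k}\}_{i,k}$.
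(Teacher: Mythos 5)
Your proposal is correct and follows essentially the same route as the paper's Appendix~A: the same split over the RIS index into the matched case $i=m$ and the unmatched cases, the same independence/zero-mean factorization killing every cross term, and the same exchangeability-plus-telescoping trick (the paper's step~(b)) that converts the per-user matched expectation into $|\mathcal{K}_m|^{-1}\mathbb{E}\bigl[\bigl|\sum_{k\in\mathcal{K}_m}h_{m,k,n}\bigr|\bigr]$. The only difference is cosmetic: the paper evaluates the constants explicitly ($\mathbb{E}[|h_{p,m,m,n}|]=\sqrt{\pi}/2$ and $\mathbb{E}[|S_m(n)|]=\sqrt{|\mathcal{K}_m|\pi}/2$, giving $\mathbb{E}[\ell_{m,k}]=\frac{\sqrt{p_k}\beta_{m,k}}{\lambda_m}\frac{\pi N}{4\sqrt{|\mathcal{K}_m|}}$, later reused in Proposition~1), whereas you stop at strict positivity, which is all the theorem itself requires.
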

\begin{proof}
    See Appendix A. $\hfill\blacksquare$
\end{proof}
According to Theorem \ref{theorem1}, we conclude that favorable propagation can be achieved through phase matching using low-cost RIS reflecting elements, thereby eliminating the need for expensive large-scale receiving antennas. After the statistical interference elimination, we focus on joint design of power control and denoising factors from the following two perspectives to enhance the FL convergence.

\emph{1) Unbiased design:} From the perspective of first-order moment, ensuring unbiased gradient estimation is of pivotal significance for guaranteeing FL convergence~\cite{2}. Hence, we consider the following unbiasedness-oriented method.
\begin{proposition}\label{proposition1}
    By setting $p_{k}=\sigma_{m,t,k}^2\beta_{m,k}^{-2}{\zeta_m^2}$, $\forall k \in \mathcal{K}_m$, and $\lambda_m=\frac{\pi N\!\sqrt{\left|\mathcal{K}_m\right|}\zeta_m}{4}$, the gradient estimation in (\ref{eq4}) is unbiased, where $\zeta_m\!\!=\!\!\min\limits_{k\in\mathcal{K}_m}\!\!\!\frac{\sqrt{P_{k}}\beta_{m\!,k}}{\sigma_{m,t,k}\sqrt{D}}$ and $P_k$ is the maximum transmit power.
\end{proposition}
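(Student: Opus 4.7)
The plan is to take expectations term by term in (\ref{eq4}) and to show that $\mathbb{E}[\hat{\mathbf{g}}_{m,t}]=\mathbf{g}_{m,t}$ under the prescribed choices. The equivalent noise term contributes nothing because $\mathbf{z}_{m,t}$ is circularly symmetric Gaussian, and Theorem~\ref{theorem1} annihilates the inter-cluster part in expectation via $\mathbb{E}[\ell_{m,k^\prime}]=0$ for $k^\prime\notin\mathcal{K}_m$, invoking independence of the channel coefficients from the (data-dependent) gradients. Substituting the identity $\mathbf{g}_{m,t,k}=\sigma_{m,t,k}\bar{\mathbf{g}}_{m,t,k}+u_{m,t,k}\mathbf{1}$ into $\mathbf{g}_{m,t}=\frac{1}{|\mathcal{K}_m|}\sum_{k\in\mathcal{K}_m}\mathbf{g}_{m,t,k}$ and matching with the surviving part of (\ref{eq4}), the mean-restoring $u$-term cancels directly, and unbiasedness reduces to verifying the scalar identity $\mathbb{E}[\ell_{m,k}]=\sigma_{m,t,k}/|\mathcal{K}_m|$ for every $k\in\mathcal{K}_m$.

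To evaluate this expectation I would revisit the phase configuration of Theorem~\ref{theorem1}. All RIS indices $i\neq m$ contribute zero, because $\theta_{i,n}$ is a measurable function only of $h_{p,i,i,n}$ and $\{h_{i,k^{\prime\prime},n}\}_{k^{\prime\prime}\in\mathcal{K}_i}$, which are independent of the zero-mean $h_{p,i,m,n}$; hence only the $i=m$ RIS survives. The phase alignment gives $h_{p,m,m,n}^{\ast}e^{j\theta_{m,n}}=|h_{p,m,m,n}|e^{j\angle S_n}$ with $S_n\triangleq\sum_{k^\prime\in\mathcal{K}_m}h_{m,k^\prime,n}^{\ast}$. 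The key observation is the identity $\sum_{k\in\mathcal{K}_m}\Re\{e^{j\angle S_n}h_{m,k,n}\}=\Re\{e^{j\angle S_n}S_n^{\ast}\}=|S_n|$, coupled with the exchangeability of $\{h_{m,k,n}\}_{k\in\mathcal{K}_m}$, which forces each summand to share the common mean $\mathbb{E}[|S_n|]/|\mathcal{K}_m|$. Plugging in $\mathbb{E}[|S_n|]=\sqrt{\pi|\mathcal{K}_m|}/2$ (since $S_n\sim\mathcal{CN}(0,|\mathcal{K}_m|)$), $\mathbb{E}[|h_{p,m,m,n}|]=\sqrt{\pi}/2$, and summing over the $N$ reflecting elements yields $\mathbb{E}[\ell_{m,k}]=\frac{\sqrt{p_k}\beta_{m,k}}{\lambda_m}\cdot\frac{\pi N}{4\sqrt{|\mathcal{K}_m|}}$.

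Substituting $p_k=\sigma_{m,t,k}^{2}\beta_{m,k}^{-2}\zeta_m^{2}$ produces $\sqrt{p_k}\beta_{m,k}=\sigma_{m,t,k}\zeta_m$, so requiring the expression above to equal $\sigma_{m,t,k}/|\mathcal{K}_m|$ forces $\lambda_m=\pi N\sqrt{|\mathcal{K}_m|}\zeta_m/4$, exactly the stated choice. Feasibility of the power allocation follows from $\zeta_m\leq\sqrt{P_k}\beta_{m,k}/(\sigma_{m,t,k}\sqrt{D})$, which implies $p_k D\leq P_k$ and honors the per-device transmit-energy budget across the $D$ model dimensions. I expect the main obstacle to be the symmetry reduction that collapses the per-$k$ expectation to a common value: without the sum-and-exchangeability trick one would face a joint integral over correlated quantities built from the same Rayleigh vector, whereas with it the residual computation uses only standard first moments of complex Gaussians.
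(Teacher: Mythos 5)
Your proof is correct and follows essentially the same route as the paper: reduce unbiasedness to the identity $\mathbb{E}[\ell_{m,k}]=\sigma_{m,t,k}/|\mathcal{K}_m|$, evaluate $\mathbb{E}[\ell_{m,k}]=\frac{\sqrt{p_k}\beta_{m,k}}{\lambda_m}\cdot\frac{\pi N}{4\sqrt{|\mathcal{K}_m|}}$ exactly as in the paper's Appendix A (same exchangeability-plus-$|S_n|$ trick and Rayleigh first moments), and then substitute the prescribed $p_k$ and $\lambda_m$. Your added feasibility check $p_kD\le P_k$, explaining the $\sqrt{D}$ in $\zeta_m$, is a sensible bonus the paper leaves implicit.
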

\begin{proof}
    See Appendix B. $\hfill\blacksquare$
\end{proof}

\emph{2) Minimum mean squared error (MMSE) design:} Apart from unbiasedness of the first-order moment, the second-order moment, known as MSE, also plays a decisive role in FL convergence \cite{7}. For any given power control of $p_k$, we derive the optimal denoising factors in closed form in the following proposition.
\begin{proposition}\label{proposition2}
    The optimal denoising factor of cluster~$m$ for minimizing MSE is equal to
    \begin{align}\label{eq6}
    \!\!\lambda_m^*\!=\!|\mathcal{K}_m|\frac{\sum_{i=1}^M \sum_{k\in\mathcal{K}_i}p_k \bar{h}_{m,k}^2\sigma_{i,t,k}^2\!+\!\frac{\sigma^2}{2}}{\sum_{k\in\mathcal{K}_m}{\sqrt{p_k}\bar{h}_{m,k} \sigma_{m,t,k}^3}},
    \end{align}
    where $\bar{h}_{m,k} \triangleq \sum_{i=1}^{M}{\beta_{i,k}\Re\!\left\{\!\mathbf{h}_{p,i,m}^H\mathbf{\Theta}_i\mathbf{h}_{i,k}\!\right\}\!}$.
\end{proposition}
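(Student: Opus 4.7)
The plan is to substitute $\ell_{m,k}=\sqrt{p_k}\bar{h}_{m,k}/\lambda_m$ into the representation (\ref{eq4}) of $\hat{\mathbf{g}}_{m,t}$, subtract the true cluster gradient $\mathbf{g}_{m,t}=\frac{1}{|\mathcal{K}_m|}\sum_{k\in\mathcal{K}_m}\mathbf{g}_{m,t,k}$, and solve the scalar minimisation $\min_{\lambda_m>0}\mathbb{E}\|\hat{\mathbf{g}}_{m,t}-\mathbf{g}_{m,t}\|^{2}$. First I would observe that the mean-restoration term $\sum_{k\in\mathcal{K}_m}\frac{u_{m,t,k}}{|\mathcal{K}_m|}\mathbf{1}$ in $\hat{\mathbf{g}}_{m,t}$ cancels exactly against the mean component of $\mathbf{g}_{m,t}$, so the error reduces to an in-cluster scaling residual $\sum_{k\in\mathcal{K}_m}(\ell_{m,k}-\sigma_{m,t,k}/|\mathcal{K}_m|)\bar{\mathbf{g}}_{m,t,k}$, a residual inter-cluster interference $\sum_{m'\neq m}\sum_{k'\in\mathcal{K}_{m'}}\ell_{m,k'}\bar{\mathbf{g}}_{m',t,k'}$, and the filtered noise $\bar{\mathbf{z}}_{m,t}$. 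Adopting the standard statistical model in which the centered per-entry gradient of device $k\in\mathcal{K}_i$ has zero mean with second moment $\sigma_{i,t,k}^{2}$, with different devices and different entries mutually uncorrelated and independent of the additive noise (whose real-part variance is $\sigma^{2}/2$), the three contributions decouple additively in $\mathbb{E}\|\cdot\|^{2}$.

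Substituting $\ell_{m,k}=\sqrt{p_k}\bar{h}_{m,k}/\lambda_m$ and expanding the squared binomial in the in-cluster residual, the three energies become $\sum_{k\in\mathcal{K}_m}(\sqrt{p_k}\bar{h}_{m,k}/\lambda_m-\sigma_{m,t,k}/|\mathcal{K}_m|)^{2}\sigma_{m,t,k}^{2}$, $\sum_{m'\neq m}\sum_{k'\in\mathcal{K}_{m'}}p_{k'}\bar{h}_{m,k'}^{2}\sigma_{m',t,k'}^{2}/\lambda_m^{2}$, and $\sigma^{2}/(2\lambda_m^{2})$. Collecting the $\lambda_m^{-2}$ and $\lambda_m^{-1}$ coefficients then yields
\begin{align*}
J(\lambda_m)=\frac{1}{\lambda_m^{2}}\Bigl[\sum_{i=1}^{M}\sum_{k\in\mathcal{K}_i}p_k\bar{h}_{m,k}^{2}\sigma_{i,t,k}^{2}+\frac{\sigma^{2}}{2}\Bigr]-\frac{2}{\lambda_m|\mathcal{K}_m|}\sum_{k\in\mathcal{K}_m}\sqrt{p_k}\bar{h}_{m,k}\sigma_{m,t,k}^{3}+\mathrm{const},
\end{align*}
which is strictly convex in $1/\lambda_m$.

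Setting $dJ/d\lambda_m=0$ then produces the unique positive minimiser
\[
\lambda_m^{*}=|\mathcal{K}_m|\frac{\sum_{i=1}^{M}\sum_{k\in\mathcal{K}_i}p_k\bar{h}_{m,k}^{2}\sigma_{i,t,k}^{2}+\sigma^{2}/2}{\sum_{k\in\mathcal{K}_m}\sqrt{p_k}\bar{h}_{m,k}\sigma_{m,t,k}^{3}},
\]
which is exactly (\ref{eq6}); positivity of the $\lambda_m^{-2}$ coefficient (equivalently $J''(\lambda_m^{*})>0$) confirms that this stationary point is the global minimum. The main obstacle is the MSE decomposition itself: because Theorem \ref{theorem1} only nulls the \emph{mean} of the off-cluster coefficients $\ell_{m,k'}$, their second moments $\ell_{m,k'}^{2}$ remain nonzero and must be retained as interference power, and one must carefully justify that the cross terms between different devices' centered gradients, and between signal and noise, all average to zero so that $J(\lambda_m)$ collapses to the simple quadratic-in-$1/\lambda_m$ form above; once that clean decomposition is in hand, the scalar minimisation is a routine first-order calculation.
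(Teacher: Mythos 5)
Your proposal mirrors the paper's proof of Proposition~\ref{proposition2} essentially step for step: the same cancellation of the mean-restoration term against the mean component of $\mathbf{g}_{m,t}$, the same decomposition into in-cluster scaling residual, inter-cluster interference, and filtered noise with the cross terms averaged out, and the same stationarity condition in $1/\lambda_m$ yielding (\ref{eq6}). The only differences are minor bookkeeping: you retain the factor $1/2$ from the real-part noise variance, which in fact matches the $\sigma^2/2$ in the proposition statement (the paper's appendix writes $\sigma^2$ there), and your gradient-statistics assumption should be stated as $\mathbb{E}\bigl[\,[\bar{\mathbf{g}}_{i,t,k}]_d^2\,\bigr]=\sigma_{i,t,k}^2$ for the \emph{normalized} gradients (as the paper implicitly uses) rather than for the centered gradients, since otherwise the $\sigma_{i,t,k}^2$ factors in your three energy terms would not appear.
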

\begin{proof}
See Appendix C. $\hfill\blacksquare$
\end{proof}
Substituting the optimal $\lambda_m^*$, we formulate a power control optimization problem for minimizing the sum MSE, which can be solved via typical optimization methods and please refer to Appendix C for details.

To summarize, we conclude the proposed RIS-enabled personalized AirFL approach in Algorithm \ref{alg0}.
\begin{algorithm}\footnotesize 
\caption{roposed RIS-enabled personalized AirFL approach} \label{alg0}
\begin{algorithmic}[1]  
\REPEAT
\STATE The PS broadcasts the latest personalized models $\{\mathbf{w}_{m,t}\}_{m\in[M]}$ to each device.
\FOR{ each device $k=1,2,\cdots,K$}
\STATE Identify its cluster $\mathcal{K}_m$. 
\STATE Computes its local gradient $\mathbf{g}_{m,t,k}$ based on $\mathbf{w}_m$ and its local dataset $\mathcal{D}_{k}$.
\STATE Normalize its local gradient as $\bar{\mathbf{g}}_{m,t,k}\!\triangleq\!\frac{1}{\sigma_{m,t,k}}(\mathbf{g}_{m,t,k}\!-\!u_{m,t,k}\mathbf{1})$.
\STATE Upload the mean $u_{m,t,k}$, standard deviation $\sigma_{m,t,k}$, and its cluster identity $m$ to the PS.
\ENDFOR
\STATE The PS configures $N$ RISs according to the specific clusters $\left\{\mathcal{K}_m\right \}_{m=1}^M$ and $\theta_{m,n}=-\angle h_{p,m,m,n}^\ast+\angle\sum_{k\in\mathcal{K}_m} h_{m,k,n}^\ast$.
\STATE  The PS calculates the power control $p_k$ for each device and the denoising factor $\lambda_m$ for each cluster according to the unbiased strategy or the MMSE design, and then feedback the selected power $\left\{p_k\right\}_{k=1}^K$ to each device.
\STATE Each device simultaneously upload its local gradient to the PS based on the predetermined transmit power $p_k$.
\STATE Based on the received signal, the PS computes an estimated global gradient of cluster $m$ as ${\hat{\mathbf{g}}}_{m,t}\!=\!\frac{\Re\left\{\mathbf{y}_{m,t}\right\}}{\lambda_m}\!+\!\sum_{k\in\mathcal{K}_m }\!\!\frac{u_{m,t,k}}{|\mathcal{K}_m|}\mathbf{1}$.
\STATE The PS updates the personalized model for cluster $m$ through $\mathbf{w}_{m,t+1}\!=\!\mathbf{w}_{m,t}\!-\!\eta_{m,t} \hat{\mathbf{g}}_{m,t}$.
\STATE Set $t=t+1$.
\UNTIL{Convergence}
\end{algorithmic} 
\end{algorithm}

\section{Numerical Results}
Assuming that the distances between the PS and each RIS are $200~{\rm m}$, and all the devices in each cluster~$m\!\in\![M]$ are uniformly distributed within a disk of radius $300~{\rm m}$ centered at the $m$-th RIS. The path loss exponent for all the links is $2.2$. 

Fig.~\ref{fig1} depicts the normalized MSE (NMSE) as a function of $N$ for different values of $P_k$. It is observed that as $P_k$ increases, the improvement in NMSE performance is marginal. This is due to the fact that an increase in transmit power amplifies not only the useful signals but interference. Furthermore, the NMSE of our proposed designs decreases linearly with large $N$ on a log-log scale. This phenomenon becomes more obvious as $P_k$ increases, owing to the diminishing impact of noise error. In addition, the NMSE curves with corrupted RIS phase shifts (i.e., 1-bit phase noise) are presented to validate the robustness of our proposed designs. The baseline utilizing random RIS phase shifts fails to obtain any effective performance enhancements as $N$ and $P_k$ increases, which demonstrates the importance of RIS phase shift configuration in Theorem \ref{theorem1}.

\begin{figure}[!t]
    \setlength{\abovecaptionskip}{0pt}
    \setlength{\belowcaptionskip}{0pt}
    \centering
    \includegraphics[width=120 mm,height=8cm]{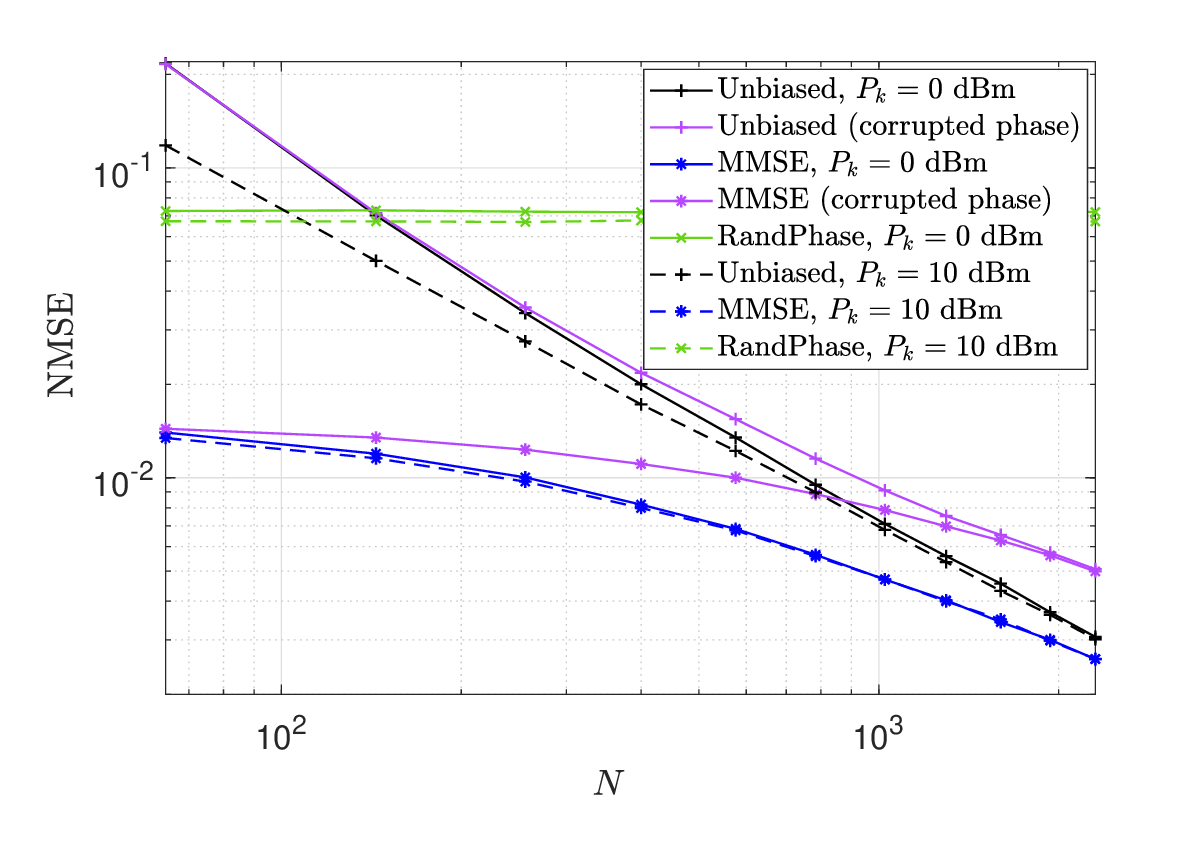}
    \caption{NMSE versus $N$ with $K=100$ and $M=5$.}
    \label{fig1} \end{figure}

\begin{appendices}
\section{Proof of Theorem \ref{theorem1}}
By substituting the RIS phase shifts in Theorem \ref{theorem1}, the mean of $\ell_{m,k}$ for $k \in \mathcal{K}_m$ is calculated as
\begin{align}\label{A1}
\mathbb{E}\left[\ell_{m,k}\right]=\frac{\sqrt{p_k}}{\lambda_m}\sum_{i=1}^{M}{\beta_{i,k}\mathbb{E}\left[\Re\left\{\mathbf{h}_{p,i,m}^H\mathbf{\Theta}_i\mathbf{h}_{i,k}\right\}\right]}.
\end{align}

1) For $i=m$, we have
\begin{align}\label{A2}
   \mathbb{E}\left[\Re\left\{\mathbf{h}_{p,m,m}^H\mathbf{\Theta}_m\mathbf{h}_{m,k}\right\}\right]&=
   \Re\left\{\mathbb{E}\left[\mathbf{h}_{p,m,m}^H\mathbf{\Theta}_m\mathbf{h}_{m,k}\right]\right\}\nonumber \\
   &=\Re\left\{\mathbb{E}\left[\sum_{n=1}^{N}{\left|h_{p,m,m,n}^\ast\right|h_{m,k,n}\frac{\sum_{k\in\mathcal{K}_m} h_{m,k,n}^\ast}{\left|\sum_{k\in\mathcal{K}_m} h_{m,k,n}^\ast\right|}}\right]\right\}\nonumber \\
   &\overset{\text{(a)}}{=}\frac{\sqrt\pi N}{2}\Re\left\{\mathbb{E}\left[h_{m,k,n}\frac{\sum_{k\in\mathcal{K}_m} h_{m,k,n}^\ast}{\left|\sum_{k\in\mathcal{K}_m} h_{m,k,n}^\ast\right|}\right]\right\}\nonumber \\
   &\overset{\text{(b)}}{=}\frac{\sqrt\pi N}{2\left|\mathcal{K}_m\right|}\Re\left\{\mathbb{E}\left[\sum_{k\in\mathcal{K}_m} h_{m,k,n}\frac{\sum_{k\in\mathcal{K}_m} h_{m,k,n}^\ast}{\left|\sum_{k\in\mathcal{K}_m} h_{m,k,n}^\ast\right|}\right]\right\}\nonumber \\
   &=\frac{\sqrt\pi N}{2\left|\mathcal{K}_m\right|}\Re\left\{\mathbb{E}\left[\left|\sum_{k\in\mathcal{K}_m} h_{m,k,n}\right|\right]\right\}\nonumber \\
   &\overset{\text{(c)}}{=}\frac{\sqrt\pi N}{2\left|\mathcal{K}_m\right|}\frac{\sqrt{\left|\mathcal{K}_m\right|\pi}}{2}=\frac{\pi N}{4\sqrt{\left|\mathcal{K}_m\right|}},
\end{align}
where $\mathrm{(a)}$ is due to the independence between different channels and $\mathbb{E}\left[|h_{p,m,m,n}^*|\right]=\frac{\sqrt{\pi}}{2}$~\cite{xjd}, $\mathrm{(b)}$ follows from the identically distributed characteristic of $h_{m,k,n}\frac{\sum_{k\in\mathcal{K}_m} h_{m,k,n}^\ast}{\left|\sum_{k\in\mathcal{K}_m} h_{m,k,n}^\ast\right|}$ for different $k\in\mathcal{K}_m$, and $({\rm c})$ comes from the fact that $\sum_{k\in\mathcal{K}_m} h_{m,k,n}\sim\mathcal{CN}(0,\left|\mathcal{K}_m\right|)$~\cite{sw1,sw2}. 

2) For $i\neq m$, we have
\begin{align}\label{A3}
    \mathbb{E}\left[\Re\left\{\mathbf{h}_{p,i,m}^H\mathbf{\Theta}_i\mathbf{h}_{i,k}\right\}\right]&=\Re\left\{\mathbb{E}\left[\mathbf{h}_{p,i,m}^H\mathbf{\Theta}_i\mathbf{h}_{i,k}\right]\right\}\nonumber \\
    & =\Re \left\{\mathbb{E}\left[\sum_{n=1}^{N}{h_{p,i,m,n}^\ast {\rm e}^{-j\angle h_{p,i,i,n}^\ast}h_{i,k,n}\frac{\sum_{k^\prime\in\mathcal{K}_i} h_{i,k^\prime,n}^\ast}{\left|\sum_{k^\prime\in\mathcal{K}_i} h_{i,k^\prime,n}^\ast\right|}}\right]\right\}\nonumber \\
    &\overset{\text{(d)}}{=}N\cdot \Re \left\{\mathbb{E}\left[h_{p,i,m,n}^\ast\right]\mathbb{E}\left[{\rm e}^{-j\angle h_{p,i,i,n}^\ast}\right]\mathbb{E}\left[h_{i,k,n}\right]\mathbb{E}\left[\frac{\sum_{k^\prime\in\mathcal{K}_i} h_{i,k^\prime,n}^\ast}{\left|\sum_{k^\prime\in\mathcal{K}_i} h_{i,k^\prime,n}^\ast\right|}\right]\right\}\nonumber \\
    &=0,
\end{align}
where $\mathrm{(d)}$ exploits the independence of $\mathbf{h}_{p,i,m}$, $\mathbf{h}_{p,i,i}$, $\mathbf{h}_{i,k}$, and $\mathbf{h}_{i,k^\prime}$, and the last equality comes from $h_{p,i,m,n}\sim\mathcal{CN}(0,1)$.

Then, by substituting (\ref{A2}) and (\ref{A3}) into (\ref{A1}), it yields
\begin{align}\label{A4}
\mathbb{E}\left[\ell_{m,k}\right]=\frac{\sqrt{p_k}\beta_{m,k}}{\lambda_m}\frac{\pi N}{4\sqrt{\left|\mathcal{K}_m\right|}}>0.
\end{align}

In addition, the mean of $\ell_{m,k^\prime}$ for $k^\prime\in\mathcal{K}_{m^\prime}$ and $m^\prime\neq m$ is calculated as
\begin{align}\label{A5}
\mathbb{E}\left[\ell_{m,k^\prime}\right]=\frac{\sqrt{p_{k^\prime}}}{\lambda_m}\sum_{i=1}^{M}{\beta_{i,k^\prime}\mathbb{E}\left[\Re\left\{\mathbf{h}_{p,i,m}^H\mathbf{\Theta}_i\mathbf{h}_{i,k^\prime}\right\}\right]}.
\end{align}

1) For $i=m$, we have
\begin{align}\label{A6}
    \mathbb{E}\left[\Re\left\{\mathbf{h}_{p,m,m}^H\mathbf{\Theta}_m\mathbf{h}_{m,k^\prime}\right\}\right]&=\Re\left\{\mathbb{E}\left[\mathbf{h}_{p,m,m}^H\mathbf{\Theta}_m\mathbf{h}_{m,k^\prime}\right]\right\}\nonumber \\
    &=\Re\left\{\mathbb{E}\left[\sum_{n=1}^{N}{\left|h_{p,m,m,n}^\ast\right|h_{m,k^\prime,n}\frac{\sum_{k\in\mathcal{K}_m} h_{m,k,n}^\ast}{\left|\sum_{k\in\mathcal{K}_m} h_{m,k,n}^\ast\right|}}\right]\right\}\nonumber \\
    &=N\cdot \Re\left\{\mathbb{E}\left[\left|h_{p,m,m,n}^\ast\right|\right]\mathbb{E}\left[h_{m,k^\prime,n}\right]\mathbb{E}\left[\frac{\sum_{k\in\mathcal{K}_m} h_{m,k,n}^\ast}{\left|\sum_{k\in\mathcal{K}_m} h_{m,k,n}^\ast\right|}\right]\right\}\nonumber \\
    &=0.\end{align}

2) For $i=m^\prime$, we have
\begin{align}\label{A7}
    \mathbb{E}\!\left[\Re\!\left\{\mathbf{h}_{p,m^\prime,m}^H\mathbf{\Theta}_{m^\prime}\mathbf{h}_{m^\prime,k^\prime}\right\}\right]&\!=\!\Re\left\{\mathbb{E}\left[\mathbf{h}_{p,m^\prime,m}^H\mathbf{\Theta}_{m^\prime}\mathbf{h}_{m^\prime,k^\prime}\right]\right\}\nonumber \\
    &\!=\!\Re\left\{\mathbb{E}\left[\sum_{n=1}^{N}{h_{p,m^\prime,m,n}^\ast {\rm e}^{-j\angle h_{p,m^\prime,m^\prime,n}^\ast}h_{m^\prime,k^\prime,n}\frac{\sum_{k^\prime\in\mathcal{K}_{m^\prime}} h_{m^\prime,k^\prime,n}^\ast}{\left|\sum_{k^\prime\in\mathcal{K}_{m^\prime}} h_{m^\prime,k^\prime,n}^\ast\right|}}\right]\right\}\nonumber\\
    &\!=\!N\cdot \Re\!\left\{\!\mathbb{E}\!\left[h_{p,m^\prime,m,n}^\ast\right]\!\mathbb{E}\!\left[{\rm e}^{-j\angle h_{p,m^\prime,m^\prime,n}^\ast}\right]\!\mathbb{E}\!\!\left[h_{m^\prime,k^\prime,n}\frac{\sum_{k^\prime\in\mathcal{K}_{m^\prime}} h_{m^\prime,k^\prime,n}^\ast}{\left|\sum_{k^\prime\in\mathcal{K}_{m^\prime}} h_{m^\prime,k^\prime,n}^\ast\right|}\!\right]\!\right\}\nonumber \\
    &\!=\!0.
\end{align}

3) For $i\neq m$ and $i\neq m^\prime$, we have
\begin{align}\label{A8}
    \mathbb{E}\left[\Re\left\{\mathbf{h}_{p,i,m}^H\mathbf{\Theta}_i\mathbf{h}_{i,k^\prime}\right\}\right]&=\Re\left\{\mathbb{E}\left[\mathbf{h}_{p,i,m}^H\mathbf{\Theta}_i\mathbf{h}_{i,k^\prime}\right]\right\}\nonumber \\
    &=\Re\left\{\mathbb{E}\left[\sum_{n=1}^{N}{h_{p,i,m,n}^\ast {\rm e}^{-j\angle h_{p,i,i,n}^\ast}h_{i,k^\prime,n}\frac{\sum_{k^{\prime\prime}\in\mathcal{K}_i} h_{i,k^{\prime\prime},n}^\ast}{\left|\sum_{k^{\prime \prime}\in\mathcal{K}_i} h_{i,k^{\prime \prime},n}^\ast\right|}}\right]\right\}\nonumber \\
    &=N\cdot \Re\left\{\mathbb{E}\left[h_{p,i,m,n}^\ast\right]\mathbb{E}\left[{\rm e}^{-j\angle h_{p,i,i,n}^\ast}\right]\mathbb{E}\left[h_{i,k^\prime,n}\right]\mathbb{E}\left[\frac{\sum_{k^{\prime \prime}\in\mathcal{K}_i} h_{i,k^{\prime \prime},n}^\ast}{\left|\sum_{k^{\prime \prime}\in\mathcal{K}_i} h_{i,k^{\prime \prime},n}^\ast\right|}\right]\right\}\nonumber \\
    &=0.
\end{align}

Then, by substituting (\ref{A6}), (\ref{A7}) and (\ref{A8}) into (\ref{A5}), it yields 
\begin{align}\label{A9}
\mathbb{E}\left[\ell_{m,k^\prime}\right]=0.
\end{align}

\section{Proof of Proposition~\ref{proposition1}}
By directly applying Theorem \ref{theorem1} and substituting the parameters in Proposition~\ref{proposition1} into the expectation of the estimated global gradient $\hat{\mathbf{g}}_{m,t}$, we have 
\begin{align}\label{B1}
\mathbb{E}\left[{\hat{\mathbf{g}}}_{m,t}\right]=&\sum_{k\in\mathcal{K}_m}{\mathbb{E}\left[\ell_{m,k}\right]\bar{\mathbf{g}}_{m,t,k}}+ \sum_{k\in\mathcal{K}_m }\!\frac{u_{m,t,k}}{|\mathcal{K}_m|}\mathbf{1}+\sum_{\substack{1\le m^\prime\le M\\m^\prime\neq m}}\sum_{k^\prime\in\mathcal{K}_{m^\prime}}{\mathbb{E}\left[\ell_{m,k^\prime}\right]\bar{\mathbf{g}}_{m^\prime,t,k^\prime}}\!+\mathbb{E}\left[{\bar{\mathbf{z}}}_{m,t}\right]\nonumber \\
=&\sum_{k\in\mathcal{K}_m}{\frac{\sigma_{m,t,k}}{\left|\mathcal{K}_m\right|}\frac{1}{\sigma_{m,t,k}}\left(\mathbf{g}_{m,t,k}-u_{m,t,k}\mathbf{1}\right)}+\sum_{k\in\mathcal{K}_m}\frac{u_{m,t,k}}{\left|\mathcal{K}_m\right|}\mathbf{1}\nonumber \\
=&\frac{1}{\left|\mathcal{K}_m\right|}\sum_{k\in\mathcal{K}_m}\mathbf{g}_{m,t,k}=\mathbf{g}_{m,t}.
\end{align}
Therefore, the expectation of the estimated global gradient $\hat{\mathbf{g}}_{m,t}$ is equal to the ground-truth global gradient $\mathbf{g}_{m,t}$ for $m\in\left[M\right]$, which ensures the unbiasedness of gradient transmission~\cite{unbia}. This completes the proof.

\section{Proof of Proposition~\ref{proposition2}}

To begin with, we formulate the MSE of gradient estimation for cluster $m$ as
\begin{align}\label{mse}
    \text{MSE}_m &\!=\!\mathbb{E}\!\left[\Vert\mathbf{g}_{m,t}-\hat{\mathbf{g}}_{m,t} \Vert^2\right]\nonumber \\
    &\!=\!\mathbb{E}\!\left[\left\Vert\sum_{k\in\mathcal{K}_m}{\!{\ell_{m,k}}\bar{\mathbf{g}}_{m,t,k}}\!+\!\sum_{k\in\mathcal{K}_m }\!\frac{u_{m,t,k}}{|\mathcal{K}_m|}\mathbf{1}\!+\!\sum_{\substack{1\le m^\prime\le M\\m^\prime\neq m}}\sum_{k'\in\mathcal{K}_m\prime}\!{\ell_{m,k'}\bar{\mathbf{g}}_{m^\prime,t,k'}}\!+\!\bar{\mathbf{z}}_{m,t}\!-\!\sum_{k\in\mathcal{K}_m} \!\frac{1}{|\mathcal{K}_m|}\mathbf{g}_{m,t,k} \right \Vert^2\right]\nonumber \\
    &\!\overset{\text{(a)}}{=}\!\mathbb{E}\!\left[\left\Vert\sum_{k\in\mathcal{K}_m}{\left(\ell_{m,k}-\frac{\sigma_{m,t,k}}{|\mathcal{K}_m|}\right)\bar{\mathbf{g}}_{m,t,k}}+ \sum_{\substack{1\le m^\prime\le M\\m^\prime\neq m}}\sum_{k'\in\mathcal{K}_m\prime}{\ell_{m,k'}\bar{\mathbf{g}}_{m^\prime,t,k'}}+\bar{\mathbf{z}}_{m,t} \right \Vert^2\right]\nonumber \\
    &\!\overset{\text{(b)}}{=}\!\sum_{k\in\mathcal{K}_m}\left(\ell_{m,k}-\frac{\sigma_{m,t,k}}{|\mathcal{K}_m|}\right)^2 \sigma_{m,t,k}^2 D + \sum_{\substack{1\le m^\prime\le M\\m^\prime\neq m}}\sum_{k'\in\mathcal{K}_m\prime} \ell_{m,k^\prime}\sigma_{m^\prime,t,k^\prime}^2D +\frac{\sigma^2 D}{\lambda_m^2}\nonumber \\
    &\!\overset{\text{(c)}}{=}\!\left(\sum_{i=1}^M \sum_{k\in \mathcal{K}_i}  p_k \bar{h}_{m,k}^2\sigma_{i,t,k}^2 D\!+\!\sigma^2 D\right)\frac{1}{\lambda_m^2}\!-\!2\left(\sum_{k\in\mathcal{K}_m} \frac{\sqrt{p_k} \bar{h}_{m,k}\sigma_{m,t,k}^3D}{|\mathcal{K}_m|}\right)\frac{1}{\lambda_m}\!+\!\sum_{k\in\mathcal{K}_m} \frac{\sigma_{m,t,k}^4D}{|\mathcal{K}_m|^2},
\end{align}
where (a) is due to the definition of $\bar{g}_{m,t,k}$, (b) exploits the statistics of $\bar{\mathbf{g}}_{m,t,k}$ and $\bar{\mathbf{z}}_{m,t}$, and (c) comes from the definition of $\ell_{m,k}$. Note that the optimization of denoising factor is an unconstrained problem. For any given power control, we derive the optimal denoising factor, $\lambda_m$, by checking the following equality
\begin{align}
    &\frac{\partial\text{MSE}_m}{\partial\lambda_m }=-2\left(\sum_{i=1}^M \sum_{k\in \mathcal{K}_i}  p_k \bar{h}_{m,k}^2\sigma_{i,t,k}^2 D + \sigma^2 D\right)\frac{1}{\lambda_m^3}+2\left(\sum_{k\in\mathcal{K}_m} \frac{\sqrt{p_k} \bar{h}_{m,k}\sigma_{m,t,k}^3D}{|\mathcal{K}_m|}\right)\frac{1}{\lambda_m^2}=0\nonumber\\
    &\Rightarrow \lambda_m^* = |\mathcal{K}_m| \frac{\sum_{i=1}^M \sum_{k\in \mathcal{K}_i}  p_k \bar{h}_{m,k}^2\sigma_{i,t,k}^2  + \sigma^2 }{\sum_{k\in\mathcal{K}_m} \sqrt{p_k} \bar{h}_{m,k}\sigma_{m,t,k}^3},
\end{align}
and the proof completes.

As for the optimization of power control, substituting the optimal $\lambda_m^*$ into (\ref{mse}), we rewrite $\text{MSE}_m$ as 
\begin{align}
    \text{MSE}_m&=\sum_{k\in\mathcal{K}_m} \frac{\sigma_{m,t,k}^4D}{|\mathcal{K}_m|^2}-\frac{\left(\sum_{k\in\mathcal{K}_m} \sqrt{p_k} \bar{h}_{m,k}\sigma_{m,t,k}^3\right)^2 D}{|\mathcal{K}_m|^2\left(\sum_{i=1}^M \sum_{k\in \mathcal{K}_i}  p_k \bar{h}_{m,k}^2\sigma_{i,t,k}^2  + \sigma^2\right)}\nonumber \\
    &=\sum_{k\in\mathcal{K}_m} \frac{\sigma_{m,t,k}^4D}{|\mathcal{K}_m|^2}-\frac{\left(\mathbf{p}^T\mathbf{b}_m\right)^2D}{\mathbf{p}^T \mathbf{A}_m \mathbf{p}+|\mathcal{K}_m|^2\sigma^2}\nonumber \\
    &=\sum_{k\in\mathcal{K}_m} \frac{\sigma_{m,t,k}^4D}{|\mathcal{K}_m|^2}-D\frac{\mathbf{p}^T\mathbf{B}_m \mathbf{p}}{\mathbf{p}^T \mathbf{A}_m \mathbf{p}+|\mathcal{K}_m|^2\sigma^2},
\end{align}
where $\mathbf{p}\triangleq [\sqrt{p_1},\sqrt{p_2},\cdots,\sqrt{p_K}]^T$, $\mathbf{b}_m \triangleq \sum_{k\in\mathcal{K}_m} \bar{h}_{m,k}\sigma_{m,t,k}^3\mathbf{e}_k$, $\mathbf{A}_m \triangleq |\mathcal{K}_m|^2\text{diag}\left\{\bar{h}_{m,k}^2 \sigma_{i,t,k}^2 \right\}$, $\mathbf{B}_m\triangleq \mathbf{b}_m \mathbf{b}_m^T$, and $\mathbf{e}_k$ is  the Kronecker delta vector with $[\mathbf{e}_k]_k=1$. Now, we formulate an equivalent power control optimization problem for minimizing the sum MSE as
\begin{align}\label{problem}
    \mathop{\text{maximize}}_{\mathbf{p}}&\quad \sum_{m=1}^M \frac{\mathbf{p}^T\mathbf{B}_m \mathbf{p}}{\mathbf{p}^T \mathbf{A}_m \mathbf{p}+|\mathcal{K}_m|^2\sigma^2}\nonumber \\
    \text{subject to}&\quad [\mathbf{p}]_k\leq \sqrt{P_k},\enspace \forall k.
\end{align}
It worth noting that the problem in (\ref{problem}) is known as the sum of quadratic ratios maximization, which has been addressed in existing works via branch and bound \cite{opt1}, harmony search method \cite{opt2} and semidefinite relaxation (SDR) technique \cite{opt3,hzy1,hzy2}.

\end{appendices}

\end{document}